\newtheorem{theorem}{Theorem}
\newtheorem{corollary}[theorem]{Corollary}
\newtheorem{remark}{Remark} 
\begin{document}
%
\title{Contagions in Social Networks: Effects of Monophilic Contagion, Friendship Paradox and Reactive Networks}
%
%
%
%

\author{Buddhika~Nettasinghe,~\IEEEmembership{Student~Member,~IEEE,}
	Vikram~Krishnamurthy,~\IEEEmembership{Fellow,~IEEE,}
	and~Kristina~Lerman
\IEEEcompsocitemizethanks{\IEEEcompsocthanksitem B. Nettasinghe and V. Krishnamurthy are with Cornell~Tech, 2 West Loop~Rd, New~York, NY 10044, USA and, School
	of Electrical and Computer Engineering, Cornell~University, NY, USA.\protect\\
E-mail:~\{dwn26, vikramk\}@cornell.edu
\IEEEcompsocthanksitem K. Lerman is with  Information Sciences Institute, University of Southern California, 4676 Admiralty Way, Marina del Rey, CA 90292, USA.\protect\\
E-mail:~lerman@isi.edu.
}
}

\IEEEtitleabstractindextext{%
\begin{abstract}
We consider SIS contagion processes over networks where, a classical assumption is that individuals' decisions to adopt a contagion are based on their immediate neighbors. However, recent literature shows that some attributes are more correlated between two-hop neighbors, a concept referred to as \textit{monophily}. This motivates us to explore monophilic contagion, the case where a contagion (e.g. a product, disease) is adopted by considering two-hop neighbors instead of immediate neighbors (e.g. you ask your friend about the new iPhone and she recommends you the opinion of one of her friends). We show that the phenomenon called \textit{friendship paradox} makes it easier for the monophilic contagion to spread widely. We also consider the case where the underlying network stochastically evolves in response to the state of the contagion (e.g. depending on the severity of a flu virus, people restrict their interactions with others to avoid getting infected) and show that the dynamics of such a process can be approximated by a differential equation whose trajectory satisfies an algebraic constraint restricting it to a manifold. Our results shed light on how graph theoretic consequences affect contagions and, provide simple deterministic models to approximate the collective dynamics of contagions over stochastic graph processes.
\end{abstract}

\begin{IEEEkeywords}
Friendship Paradox, Diffusion, SIS Model, Social Networks, Monophily, Reactive Network, Random Graphs.
\end{IEEEkeywords}}

\maketitle

\IEEEdisplaynontitleabstractindextext

%
\IEEEpeerreviewmaketitle

\ifCLASSOPTIONcompsoc
\IEEEraisesectionheading{\section{Introduction}\label{sec:introduction}}
\else
\section{Introduction}
\label{sec:introduction}
\fi

%
%
%
%
\IEEEPARstart{T}{his} spread	 of a contagion (e.g. news, innovations, cultural fads) across a population of agents interconnected by an underlying network is of fundamental interest in many fields including economics, epidemiology, computer science and computational social science.   One of the widely used standard models in the study of such diffusion processes is the Susceptible-Infected-Susceptible (SIS) model \cite{lopez2008diffusion}. In this paper, we focus on a discrete time version of the SIS model on a network which is as follows briefly (reviewed in detail in Sec. \ref{sec:preliminaries}). At each discrete time instant, a randomly sampled individual (called an agent) $m$ from the population observes a $d(m)$ (called the degree of $m$) number of other randomly selected agents (called neighbors of m). Based on the observed sample, the agent m then decides to choose one of the possible states: infected or susceptible. In this context, the aim of this paper is to answer the following  questions:
\begin{compactenum}
	\item what is the effect of some nodes being more likely to change their status than others (e.g. high degree nodes evolving faster than others)? i.e. what is the effect of the distribution with which $m$ is chosen at each time instant? 
	
	\vspace{0.2cm}
	\item what are the effects of two-hop neighbors being considered when making decisions (\textit{monophilic contagion}) and the effect of \textit{friendship paradox} on the contagion process? 
	
	\vspace{0.2cm}
	\item how can one model the collective dynamics of a contagion spreading on a \textit{reactive  network} that evolves in a manner that depends on the state of the contagion?
\end{compactenum}

\subsection{Motivation and Related Work}
\label{subsec:motivation}
The impact of network structure on contagion processes has been studied extensively in literature. Notable works include \cite{lopez2008diffusion,jackson2007relating,lopez2012influence,pastor2001epidemic,jackson2007diffusion,watts2002simple}. Three key assumptions made in most of these work are: 
\begin{compactenum}[i.]
	\item individuals decide whether to adopt the contagion or not based only on their (immediate) neighbors' actions 
	
	\item the underlying social network is fully characterized by its degree distribution

	\item the underlying social network is deterministic and remains same throughout the diffusion process.
\end{compactenum}

However, it has recently been pointed out in \cite{altenburger2018monophily} that certain attributes of individuals might be more similar to friends of friends (referred to as ``the company you are kept in") than to the attributes of their friends (referred to as ``the company you keep"). This phenomenon is referred to as \textit{monophily}. This should be contrasted to \textit{homophily} \cite{mcpherson2001birds} where attributes of individuals are similar to their friends. Hence, motivated by the concept of monophily, our first and second aims (stated previously) model and study the case where, the diffusion process is based on monophilic contagion i.e. agents take their friends of friends into account when presented with the decision to adopt the contagion\footnote{It should be noted that the concept of monophily presented in \cite{altenburger2018monophily} does not give a causal interpretation but only the correlation between two-hop neighbors. What we consider is monophilic contagion (motivated by monophily): the contagion caused by the influence of two hop neighbors.}. 

Secondly, assuming that the network is fully characterized by its degree distribution (assumption 2 stated previously) neglects important characteristics of the joint degree distribution (see for example \cite{wu2018degree} for effects of higher order structural correlations on diffusion) that captures the joint variation of the degrees of neighbors in the social network. Motivated by this, in the second aim of this paper, we also explore how the assortativity (neighbor degree correlation) affects the diffusion process. Further, we also explore how the graph theoretic phenomenon  called \textit{friendship paradox} affects the diffusion process. To the best of our knowledge, the effects of the friendship paradox on the diffusion processes on networks has not been explored in the literature.

Thirdly, we note that most real world networks are of random nature and evolve rapidly during the diffusion process. More generally, the underlying network may evolve in a manner that depends on the state of the diffusion process as well e.g. depending on the state of a spreading disease (fraction of infected individuals for example), people might restrict their interactions with others and thus, changing the structure of the contact network\footnote{\cite{nettasinghe2018influence,clementi2009broadcasting} provide further examples of random graph processes that depend on state of diffusions. One major difference between \cite{nettasinghe2018influence} and the third aim of this paper is that the random graph evolves on the same time scale in the current work while it evolves on a slower (compared to the contagion) time scale in \cite{nettasinghe2018influence}. Further, the context for \cite{nettasinghe2018influence} is independent cascade model which is different from the threshold models studied here.}. Hence, modelling a network as a deterministic graph does not capture this (diffusion state dependent) evolution of real world networks. This serves as the motivation for our third aim where, we model the underlying network as a random graph process (which evolves on the same time scale as the diffusion process) whose transition probabilities at each time instant depend on the state of the  contagion.
\subsection{Main Results and Organization}

{\bf Main results} of this paper are as follows:
\begin{enumerate}
	\item Effect of the random friends (instead of random nodes) evolving at each time step of the discrete time SIS model model is reflected in different elements of the population state evolving at different speeds. However, the minimum spreading rate (defined precisely later) of the contagion required for it to spread to a positive fraction of nodes without dying away (called the critical threshold) is invariant to this change.
	
	\item As a result of \textit{friendship paradox}, the \textit{monophilic contagion} (agents decide whether to adopt a contagion or not by observing random friends of friends) makes it easier for the contagion to prevail i.e. the critical threshold corresponding to monophilic contagion is smaller compared to the non-monophilic case. Further, disassortativity (negative degree-degree correlation coefficient) further amplifies the effect of friendship paradox.
	
	\item If the network is a \textit{reactive network} that randomly evolves depending on the state of the contagion, the collective dynamics of the network and the contagion process can be approximated by an ordinary differential equation (ODE) with an algebraic constraint. From a statistical modeling and machine learning perspective, the importance of this result relies on the fact that it provides a simple deterministic approximation of the collective stochastic dynamics of a complex system (an SIS process on a random graph, both evolving on the same time scale).
\end{enumerate}

\vspace{.25cm}
\noindent
{\bf Organization:} Section \ref{sec:preliminaries} reviews the mean-field approximation of the SIS-model and friendship paradox. Sec. \ref{sec:effects_step1} explores the effect of the distribution that samples agent $m$ in the first step of the SIS-model and show the invariance of the critical thresholds and states the first main result. Sec. \ref{sec:critical_thersholds} studies the effect of monphilic contagion, friendship paradox and degree assortativity on the contagion and states the second main result. Finally, Sec. \ref{sec:active_networks} shows how the collective dynamics of the SIS process on a random graph can be approximated by an ODE whose trajectory satisfies an algebraic constraint at every time instant.

\section{Preliminaries: Approximation of SIS Model and Friendship Paradox}
\label{sec:preliminaries}

In this section, the basic SIS model and how it can be approximated using deterministic mean-field dynamics is reviewed briefly. This approximation is utilized in the subsequent sections to obtain the main results. For detailed discussions about these results, the reader is encouraged to refer \cite{benaim2003deterministic, krishnamurthy2016, krishnamurthy2017tracking}.

\subsection{Discrete time SIS Model}
\label{subsec:SIS_model}

Consider a social network represented by an undirected graph $G = (V, E)$ where $V = \{1, 2, \dots, M\}$.  At each discrete time instant $n$, a node $v\in V$ of the network can take the state $s_n^{(v)} \in \{0,1 \}$ where, $0$ denotes the susceptible state and $1$ denotes the infected state. The degree $d(v) \in \{1,\dots, D\}$ of a node $v \in V$  is the number of nodes connected to $v$ and, $M(k)$ denotes the total number of nodes with degree $k$. Then, the degree distribution $P(k) = \frac{M(k)}{M}$ is the probability that a randomly selected node has degree $k$. Further, we also define the population state $\bar{x}_n(k)$ as the fraction of nodes with degree $k$ that are infected (state $1$) at time $n$ i.e.
\begin{equation}
\label{eq:population_state}
\bar{x}_n(k) = \frac{1}{M(k)}{\sum_{v \in V}\mathds{1}_{\{d(v) = k,\,s_n^{(v)} = 1 \}}}, \quad k = 1,\dots ,D.
\end{equation}

For this setting, we adopt the SIS model used in \cite{lopez2008diffusion, krishnamurthy2017tracking} which is as follows briefly. 

\vspace{0.1cm}
\noindent
{\bf Discrete Time SIS Model: }At each discrete time instant $n$,
\begin{compactenum}
	\item[\bf Step 1:] A node $m \in V$ is chosen with uniform probability $p^{X} (m) = 1/M$ where, $M$ is the number of nodes in the graph. 
	
	\item[\bf Step 2:] The state $s_n^{(v)} \in \{0,1 \}$ of the sampled node $m$ (in Step 1) evolves to $s_{n+1}^{(v)} \in \{0,1 \}$ with transition probabilities that depend on the degree of $m$, number of infected neighbors of $m$, population state of the network $\bar{x}_n$\footnote{$\bar{x}_n(k)$ is the fraction of infected nodes with degree $k$ i.e. $\bar{x}_n(k) = \frac{M^1(k)}{M(k)}$ where $M^1(k)$ is the number of infected nodes with degree $k$ and $M(k)$ is the number of nodes with degree $k$.} and the current state of $s^{(m)}_n$. 
\end{compactenum}

\vspace{0.25cm}
Note that the above model is a Markov chain with a state space consisting of $2^M$ states (since each of the $M$ nodes can be either infected or susceptible at any time instant). Due to this exponentially large state space, the discrete time SIS model is not mathematically tractable. However, we are interested only in the fraction of the infected nodes (as opposed to the exact state out of the $2^M$ states) and therefore, it is sufficient to focus on the dynamics of the population state $\bar{x}_n$ defined in (\ref{eq:population_state}) instead of the exact state of the infection.

\subsection{Deterministic Approximation by Mean-Field Dynamics}

The following result from \cite{krishnamurthy2017tracking} provides a useful means for obtaining a tractable deterministic approximation of the population state $\bar{x}_n$.

\begin{theorem}[Mean-Field Dynamics]
	\label{th:MFD}
	\begin{compactenum}
		\item The population state defined in (\ref{eq:population_state}) evolves according to the following stochastic difference equation driven by martingale difference process:      
		\begin{align}
			\label{eq:martingale_representation}
			\bar{x}_{n+1}(k) &= \bar{x}_{n}(k) + \frac{1}{M}[P_{01}(k, \bar{x}_n) - P_{10}(k, \bar{x}_n)] +\zeta_n \\
			\text{where,}\nonumber\\
			P_{01}(k, \bar{x}_n) &= (1-\bar{x}_n(k)) \times \nonumber\\
			&\hspace{1cm}\mathbb{P}(s_{n+1}^{m} = 1 \vert s_{n}^{m} = 0, d(m) = k, \bar{x}_n ) \label{eq:scaled_transition_prob_1}\\
			P_{10}(k, \bar{x}_n)  &= \bar{x}_n(k)\mathbb{P}(s_{n+1}^{m} = 0 \vert s_{n}^{m} = 1, d(m) = k, \bar{x}_n ).\label{eq:scaled_transition_prob_2}
		\end{align} are the scaled transition probabilities of the states and, $\zeta_n$ is a martingale difference process with $\vert \vert \zeta_n\vert \vert_2 \leq \frac{\Gamma}{M}$ for some positive constant $\Gamma$.

		\item Consider the mean-field dynamics process associated with the population state:
		\begin{equation}
		\label{eq:MFD_approximation_X}
		x_{n+1} (k) = x_{n}(k) + \frac{1}{M}\big(	P_{01}(k, x_n) - P_{10}(k, x_n)	\big)
		\end{equation}
		where, $P_{01}(k, x_n)$ and $P_{10}(k, x_n)$ are as defined in (\ref{eq:scaled_transition_prob_1}), (\ref{eq:scaled_transition_prob_2}) and $x_0 = \bar{x}_0$.
		Then, for a time horizon of T points, the deviation between the mean-field dynamics (\ref{eq:MFD_approximation_X})  and the actual population state $\bar{x}_n$ of the SIS model satisfies
		\begin{equation}
		\mathbb{P}\{\underset{0\leq n \leq T}{\max}\vert\vert  x_n -\bar{x}_n   \vert\vert_{\infty} \geq \epsilon   \} \leq C_1\exp(C_2 \epsilon^2M)
		\end{equation} for some positive constants $C_1, C_2$ providing $T=O(M)$.
	\end{compactenum}
\end{theorem}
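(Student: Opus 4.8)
The plan is to treat the theorem as a standard stochastic-approximation (mean-field) result: Part~1 identifies the conditional drift of the population state and packages the remainder as a bounded martingale difference, and Part~2 converts this decomposition into an exponential concentration bound via a Gronwall estimate combined with the Azuma--Hoeffding inequality.

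For Part~1, I would condition on the natural filtration $\mathcal{F}_n$ generated by the infection configuration up to time $n$. Since node $m$ is sampled uniformly and only the state of that single node can change in Step~2, the increment $\bar{x}_{n+1}(k)-\bar{x}_n(k)$ is nonzero only when $d(m)=k$, in which case it equals $\pm 1/M(k)$. Among the $M(k)(1-\bar{x}_n(k))$ susceptible degree-$k$ nodes each flips to infected with probability $\mathbb{P}(s_{n+1}^{m}=1\mid s_n^{m}=0, d(m)=k,\bar{x}_n)$, and symmetrically for the infected ones; summing these per-node contributions and using $p^{X}(m)=1/M$ gives
\begin{equation}
\mathbb{E}[\bar{x}_{n+1}(k)-\bar{x}_n(k)\mid \mathcal{F}_n]=\frac{1}{M}\big(P_{01}(k,\bar{x}_n)-P_{10}(k,\bar{x}_n)\big).
\end{equation}
Defining $\zeta_n$ to be the centered increment makes it a martingale difference by construction. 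Because exactly one node updates per step and the jump size is $1/M(k)$, the bound $\|\zeta_n\|_2\le \Gamma/M$ follows provided $M(k)$ scales proportionally to $M$ (a mild regularity assumption on the degree distribution), with the proportionality constant absorbed into $\Gamma$.

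For Part~2, set $e_n=\bar{x}_n-x_n$ with $e_0=0$ and subtract the mean-field recursion~(\ref{eq:MFD_approximation_X}) from~(\ref{eq:martingale_representation}). Writing $f=P_{01}-P_{10}$ and telescoping yields $e_n=\frac{1}{M}\sum_{j=0}^{n-1}\big(f(\bar{x}_j)-f(x_j)\big)+S_n$, where $S_n=\sum_{j=0}^{n-1}\zeta_j$ is the accumulated martingale. Assuming the scaled transition probabilities are Lipschitz in the state argument with constant $L$ (they are built from a transition kernel that is smooth in $\bar{x}_n$), I would bound $\|f(\bar{x}_j)-f(x_j)\|_\infty\le L\|e_j\|_\infty$ and apply the discrete Gronwall inequality to obtain $\max_{0\le n\le T}\|e_n\|_\infty\le e^{LT/M}\max_{0\le n\le T}\|S_n\|_\infty$. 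The hypothesis $T=O(M)$ enters exactly here: it keeps the Gronwall amplification factor $e^{LT/M}$ bounded by an absolute constant $K$ as $M\to\infty$.

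It then remains to control the maximal martingale. For each fixed degree $k$, $\{S_n(k)\}$ is a scalar martingale with increments bounded by $\Gamma/M$, so the (maximal) Azuma--Hoeffding inequality gives $\mathbb{P}(\max_{0\le n\le T}|S_n(k)|\ge\delta)\le 2\exp(-\delta^2 M^2/(2T\Gamma^2))$; substituting $T=O(M)$ produces decay of order $\exp(-c\,\delta^2 M)$. A union bound over the $D$ degree classes, followed by the substitution $\delta=\epsilon/K$ to pass from $S_n$ back to $e_n$, yields a bound of the advertised form $C_1\exp(-C_2\epsilon^2 M)$. The main obstacle I anticipate is not the probabilistic machinery but the two analytic inputs it rests on: verifying the \emph{uniform} Lipschitz continuity of $P_{01},P_{10}$ in $\bar{x}_n$ and confirming the $O(1/M)$ jump bound under the operative degree-distribution assumptions, since these are precisely what make the constants $L$ and $\Gamma$ genuinely independent of $M$. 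Both estimates are established in the cited works \cite{benaim2003deterministic,krishnamurthy2017tracking}, which the argument can invoke directly.
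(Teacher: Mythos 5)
The paper offers no proof of this theorem: it is imported verbatim from \cite{krishnamurthy2017tracking}, with the surrounding text only giving the stochastic-approximation intuition and deferring details to \cite{krishnamurthy2016, kushner2003}. Your reconstruction is correct and follows the same route that those cited works take (the Bena\"im--Weibull-style argument): the drift computation in Part~1 is exactly right --- the probability of sampling a susceptible degree-$k$ node is $M(k)(1-\bar{x}_n(k))/M$ and the jump is $1/M(k)$, so the $M(k)$ factors cancel to give the $\frac{1}{M}(P_{01}-P_{10})$ drift --- and the Part~2 chain of telescoping, discrete Gronwall (with $T=O(M)$ taming the factor $e^{LT/M}$), maximal Azuma--Hoeffding with increments $O(1/M)$, and a union bound over the $D$ degree classes is the standard and correct way to obtain the exponential deviation bound. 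Two things you surface are worth keeping explicit, since the paper leaves them implicit: the bound $\|\zeta_n\|_2\le\Gamma/M$ genuinely requires $M(k)$ to grow proportionally to $M$ (automatic here because $P(k)=M(k)/M$ is held fixed with finitely many degree classes), and the argument needs the scaled transition probabilities to be Lipschitz in the population state (satisfied by the specific adoption rules used later in the paper, where $P_{01}$ is linear in $x_n$). Finally, note that your derivation produces $C_1\exp(-C_2\epsilon^2 M)$, with a negative exponent; the statement as printed reads $C_1\exp(C_2\epsilon^2 M)$, which is a sign typo in the paper (the bound would be vacuous otherwise), so your version is the correct one.
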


First part of Theorem \ref{th:MFD} is the martingale representation of a Markov chain (which is the population state $\bar{x}_n$). Note from (\ref{eq:martingale_representation}) that the dynamics of the population state $\bar{x}_n$ resemble a stochastic approximation recursion (new state is the old state plus a noisy term). Hence, the trajectory of the population state $\bar{x}_n$ should converge (weakly) to the deterministic trajectory given by the ODE corresponding to the mean-field dynamics in (\ref{eq:MFD_approximation_X}) as the size of the network $M$ goes to infinity i.e. the step size of the stochastic approximation algorithm goes to zero (for details, see \cite{krishnamurthy2016, kushner2003}). Second part of the theorem provides an exponential bound on the deviation of the mean-field dynamics approximation from the actual population state for a finite length of the sample path. In the subsequent sections of this paper, the mean-field approximation (\ref{eq:MFD_approximation_X}) is utilized to study the effects of various sampling methods and friendship paradox on the SIS model of information diffusion.

\subsection{Friendship Paradox}
\label{subsec:friendship_paradox}

\textit{Friendship paradox} refers to a graph theoretic consequence that was introduced in 1991 by Scott. L. Feld in \cite{feld1991}. We briefly review of the main results related to friendship paradox in this subsection. 
Feld's original statement of the friendship paradox is ``on average, the number of friends of a random friend is always greater than or equal to the number of friends of a random individual". Here, a random friend refers to a random end node $Y$ of a randomly chosen edge (a pair of friends). This statement is formally stated in Theorem \ref{th:friendship_paradox}. Further, Theorem \ref{th:friendship_paradox_v2_cao} (based on \cite{cao2016}) states that a similar result holds when the degrees of a random node $X$ and random friend $Z$ of a random node $X$ are compared as well. 

\begin{theorem}[Friendship Paradox - Version 1 \cite{feld1991}]
	\label{th:friendship_paradox}
	Let ${G = (V,E)}$ be an undirected graph, $X$ be a node chosen uniformly from $V$ and, $Y$ be a uniformly chosen node from a uniformly chosen edge $e\in E$. Then,
	\begin{equation}
	\mathbb{E} \{d(Y)\} \geq \mathbb{E}\{d(X)\},
	\end{equation} where, $d(X)$ denotes the degree of $X$. 
	
\end{theorem}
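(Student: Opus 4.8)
The plan is to evaluate both expectations explicitly in terms of the degree sequence $\{d(v)\}_{v\in V}$ and thereby reduce the claimed inequality to a single nonnegativity statement. First I would handle the easy side: since $X$ is uniform over the $M$ vertices, $\mathbb{E}\{d(X)\} = \frac{1}{M}\sum_{v\in V} d(v)$, and the handshaking lemma $\sum_{v\in V} d(v) = 2|E|$ gives $\mathbb{E}\{d(X)\} = 2|E|/M$.

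Next I would determine the law of the random friend $Y$. The key observation is that sampling a uniform edge and then one of its two endpoints uniformly makes a fixed vertex $v$ eligible to be chosen once for each of its $d(v)$ incident edges, each contributing probability $\frac{1}{|E|}\cdot\frac{1}{2}$; hence $\mathbb{P}(Y=v) = \frac{d(v)}{2|E|}$. This degree-biasing of the vertex distribution is the conceptual source of the paradox. It follows that $\mathbb{E}\{d(Y)\} = \sum_{v\in V} d(v)\,\mathbb{P}(Y=v) = \frac{1}{2|E|}\sum_{v\in V} d(v)^2$.

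Writing $S_1 = \sum_{v} d(v)$ and $S_2 = \sum_{v} d(v)^2$, the theorem reduces to $\frac{S_2}{S_1} \geq \frac{S_1}{M}$, i.e. $M S_2 \geq S_1^2$, which is exactly Cauchy--Schwarz applied to the vectors $(d(v))_{v\in V}$ and $(1)_{v\in V}$, equivalently the nonnegativity of the degree variance $\operatorname{Var}(d(X)) = \frac{S_2}{M} - \left(\frac{S_1}{M}\right)^2 \geq 0$. A short rearrangement even yields the sharper identity $\mathbb{E}\{d(Y)\} - \mathbb{E}\{d(X)\} = \frac{\operatorname{Var}(d(X))}{\mathbb{E}\{d(X)\}}$, exhibiting equality precisely for regular graphs. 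I expect the only real obstacle to be bookkeeping in the law of $Y$: correctly tracking the factor of two from the two endpoints of an edge and ensuring each edge is weighted once. Once $\mathbb{P}(Y=v) = \frac{d(v)}{2|E|}$ is established, the inequality is immediate.
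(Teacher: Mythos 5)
Your proof is correct: the paper itself gives no proof of this theorem (it is quoted as a preliminary from Feld's 1991 paper, with only an intuitive explanation), and your argument --- computing the degree-biased law $\mathbb{P}(Y=v)=\frac{d(v)}{2|E|}$, reducing the inequality to $MS_2 \geq S_1^2$, and noting the identity $\mathbb{E}\{d(Y)\}-\mathbb{E}\{d(X)\}=\operatorname{Var}(d(X))/\mathbb{E}\{d(X)\}$ --- is precisely the canonical argument of the cited source. The degree-biasing observation you identify as the crux is also exactly the intuition the paper sketches after the theorem (high-degree nodes are over-represented among sampled friends), so nothing further is needed.
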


\begin{theorem}[Friendship Paradox - Version 2 \cite{cao2016}]
	\label{th:friendship_paradox_v2_cao}
	Let ${G = (V,E)}$ be an undirected graph, $X$ be a node chosen uniformly from $V$ and, $Z$ be a uniformly chosen neighbor of a uniformly  chosen node from $V$. Then, 
	\begin{equation}
	\label{eq:fosd}
	d(Z)\geq_{fosd} d(X)
	\end{equation} 
	where, $\geq_{fosd}$ denotes the first order stochastic dominance\footnote{\label{fn:fosd}A discrete random variable $X$ (with a cumulative distribution function $F_X$) first order stochastically dominates a discrete random variable $Y$ (with a cumulative distribution function $F_Y$), denoted $X \geq_{fosd} Y$ if, ${F_X(n) \leq F_Y(n)}$, for all $n$. Further, first order stochastic dominance implies larger mean.}. 
\end{theorem}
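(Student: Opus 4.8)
The plan is to verify the first-order stochastic dominance through its standard equivalent characterization: $d(Z) \geq_{fosd} d(X)$ holds if and only if $\mathbb{E}\{f(d(Z))\} \geq \mathbb{E}\{f(d(X))\}$ for every non-decreasing function $f$. Specializing to $f(k) = \mathds{1}\{k > n\}$ recovers exactly the CDF inequality $\mathbb{P}(d(Z) \leq n) \leq \mathbb{P}(d(X) \leq n)$ of footnote~\ref{fn:fosd}, so it suffices to fix an arbitrary non-decreasing $f$ and compare the two expectations. The whole argument will hinge on expressing both expectations as sums over the same index set, namely the \emph{oriented} edges of $G$.

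First I would write the two sampling laws explicitly. Since $X$ is uniform on $V$, we have $\mathbb{E}\{f(d(X))\} = \frac{1}{M}\sum_{v\in V} f(d(v))$. To sample $Z$ one draws a uniform node and then a uniform neighbor of it, so each oriented edge $(v,w)$ with $v \sim w$ is selected with probability $\frac{1}{M\, d(v)}$, giving $\mathbb{E}\{f(d(Z))\} = \sum_{(v,w):\, v\sim w} \frac{1}{M\, d(v)}\, f(d(w))$. The key preparatory manipulation is to rewrite the $X$-expectation over this very same oriented-edge set: using $\sum_{w\sim v} 1 = d(v)$, I would note $f(d(v)) = \frac{1}{d(v)}\sum_{w\sim v} f(d(v))$, so that $\mathbb{E}\{f(d(X))\} = \sum_{(v,w):\, v\sim w} \frac{1}{M\, d(v)}\, f(d(v))$. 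Subtracting then yields $\mathbb{E}\{f(d(Z))\} - \mathbb{E}\{f(d(X))\} = \sum_{(v,w):\, v\sim w} \frac{1}{M\, d(v)}\,[\, f(d(w)) - f(d(v))\,]$.

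The crucial step, which I expect to carry the proof, is to collapse the two orientations of each unordered edge. Pairing the term for $(v,w)$ with the term for $(w,v)$, their combined contribution factors as $\frac{1}{M}\,[\, f(d(w)) - f(d(v))\,]\bigl(\tfrac{1}{d(v)} - \tfrac{1}{d(w)}\bigr) = \frac{1}{M\, d(v)\, d(w)}\,[\, f(d(w)) - f(d(v))\,]\,[\, d(w) - d(v)\,]$. Because $f$ is non-decreasing, the factors $f(d(w)) - f(d(v))$ and $d(w) - d(v)$ always share the same sign, so their product—and hence each edge's contribution—is non-negative. Summing over all edges gives $\mathbb{E}\{f(d(Z))\} \geq \mathbb{E}\{f(d(X))\}$, which, $f$ being arbitrary non-decreasing, establishes $d(Z) \geq_{fosd} d(X)$.

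The main subtlety is purely bookkeeping: recognizing that the correct common index set is the oriented edge set and that the sampling weight $\frac{1}{d(v)}$ for $Z$ is exactly what makes the paired terms combine into the manifestly sign-definite product above. There is no analytic difficulty once the pairing is set up; the argument is entirely a finite, deterministic rearrangement and requires no probabilistic coupling or limiting procedure. (If one prefers to avoid the test-function characterization altogether, the same computation with $f = \mathds{1}\{\cdot > n\}$ directly proves $\mathbb{P}(d(Z) > n) \geq \mathbb{P}(d(X) > n)$ for every $n$, which is the dominance by definition.)
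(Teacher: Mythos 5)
Your argument is correct, but note that the paper itself does not prove Theorem \ref{th:friendship_paradox_v2_cao}: it imports the statement from \cite{cao2016} and offers only the informal intuition that high-degree nodes are over-represented among friends. So there is no in-paper proof to compare against; what you have written is a self-contained replacement for the citation. Your route --- testing dominance against an arbitrary non-decreasing $f$, writing both $\mathbb{E}\{f(d(X))\}$ and $\mathbb{E}\{f(d(Z))\}$ as sums over oriented edges with weight $\frac{1}{M\,d(v)}$, and pairing the two orientations of each edge so that the contribution of $\{v,w\}$ collapses to $\frac{1}{M\,d(v)\,d(w)}\,[f(d(w))-f(d(v))]\,[d(w)-d(v)] \geq 0$ --- is the standard Chebyshev-type rearrangement mechanism underlying the cited result, and it is fully rigorous as a finite deterministic computation. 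Two remarks. First, the pairing step silently requires $d(v)\geq 1$ for every $v$ (otherwise $1/d(v)$, and indeed the sampling of $Z$ itself, is undefined); this is consistent with the paper's setup, which takes $d(v)\in\{1,\dots,D\}$, but it deserves an explicit sentence. Second, your parenthetical at the end is actually the cleaner packaging: running the identical computation with $f=\mathds{1}\{\cdot>n\}$ avoids invoking the equivalence between first-order stochastic dominance and monotone test functions, and directly yields the CDF inequality of footnote \ref{fn:fosd}. A small bonus of your edge-wise identity, which the bare citation does not provide: it shows exactly when the dominance is degenerate --- taking $f(k)=k$, equality of the means forces $d(v)=d(w)$ across every edge, i.e.\ every connected component is regular, and in that case $d(Z)$ and $d(X)$ have the same distribution.
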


The intuition behind Theorem \ref{th:friendship_paradox} and Theorem \ref{th:friendship_paradox_v2_cao} stems from the fact that individuals with a large number of friends (high degree nodes) appear as the friends of a large number of individuals. Hence, these high degree nodes can contribute to an increase in the average number of friends of friends. On the other hand, individuals with smaller number of friends appear as friends of a smaller number of individuals. Hence, they do not cause a significant change in the average number of friends of friends. 

Friendship paradox, which in essence is a sampling bias observed in undirected social networks has gained attention as a useful tool for estimation and detection problems in social networks. For example, \cite{eom2015tail} proposes to utilize friendship paradox as a sampling method for reduced variance estimation of a heavy-tailed degree distribution, \cite{christakis2010,garcia2014using, sun2014efficient} explore how the friendship paradox can be used for detecting a contagious outbreak quickly, \cite{seeman2013, lattanzi2015,horel2015scalable, kim2015social, kumar2018network} utilizes friendship paradox for maximizing influence in a social network, \cite{nettasinghe2018your} proposes friendship paradox based algorithms for efficiently polling a social network (e.g. to forecast an election) in a social network, \cite{jackson2016friendship} studies how the friendship paradox in a game theoretic setting can systematically bias the individual perceptions. Further, \cite{eom2014,hodas2013,higham2018centrality,lerman2016,bagrow2017friends, kramer2016multistep, bollen2017happiness, fotouhi2014generalized} present and analyze generalizations of the classical friendship paradox other attributes and networks. 

\section{Effect of the Sampling Distribution in the Step 1 of the SIS Model }
\label{sec:effects_step1}
Recall from Sec. \ref{subsec:friendship_paradox} that we distinguished between three sampling methods for a network ${G = (V, E)}$: a random node $X$, a random friend $Y$ and, a random friend $Z$ of a random node. Further, recall that in the discrete-time SIS model explained in Sec. \ref{subsec:SIS_model}, the node $m$ that whose state evolves is sampled uniformly from $V$ i.e. $m\overset{d}{=} X$. This section studies the effect of random friends ($Y$ or $Z$) evolving at each time instant instead of random nodes ($X$) i.e. the cases where ${m\overset{d}{=} Y}$ or ${m\overset{d}{=} Z}$. Following is our main result in this section:
\begin{theorem}
	\label{thm:MFD_samping_YZ}
	Consider the discrete time SIS model explained previously.
	\begin{compactenum}
		\item If the node $m$ is a random end  $Y$ of random link i.e. node $m$ with degree $d(m)$ is chosen with probability ${p^Y(m) = \frac{d(m)}{\sum_{v\in V}d(v)}}$, then the stochastic dynamics of the SIS model can be approximated by,
		\begin{equation}
		\label{eq:MFD_sampling_Y}
		x_{n+1} (k) = x_{n}(k) + \frac{1}{M} \frac{k}{\bar{k}}\big(	P_{01}(k, x_n) - P_{10}(k, x_n)	\big),
		\end{equation}
		where $\bar{k}$ is the average degree of the graph $G = (V, E)$. 
		
		\item If the node $m$ is a random neighbor $Z$ of a random node $X$, then the stochastic dynamics of the SIS model can be approximated by,
		\begin{align}
			x_{n+1} (k) &= x_{n}(k) +\frac{1}{M} \bigg({\sum_{k'}\frac{P(k)}{P(k')}P(k\vert k')}\bigg)\times \nonumber\\
			&\hspace{2cm} \big(	P_{01}(k, x_n) - P_{10}(k, x_n)	\big), 	\label{eq:MFD_sampling_Z}
		\end{align}
		where $\bar{k}$ is the average degree of the graph $G = (V, E)$, $P$ is the degree distribution and $P(k\vert k')$ is the probability that a random neighbor of a degree $k'$ node is of degree $k$.  Further, if the network is a degree-uncorrelated network i.e. $P(k\vert k')$ does not depend on $k'$, then (\ref{eq:MFD_sampling_Z}) will be the same as $(\ref{eq:MFD_sampling_Y})$. 
	\end{compactenum}
\end{theorem}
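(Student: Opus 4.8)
The plan is to derive both recursions by recomputing only the one-step conditional drift of the population state under the new selection rule and then invoking Theorem~\ref{th:MFD}: the martingale-difference representation and the exponential concentration estimate there depend on the sampling law only through the form of the drift, so once the drift is identified as a degree-dependent scalar multiple of $P_{01}(k,\cdot)-P_{10}(k,\cdot)$, the mean-field approximation follows with (at most) adjusted constants $C_1,C_2,\Gamma$. So first I would fix $\bar{x}_n$, condition on $\mathcal{F}_n$, and note that since exactly one node $m$ updates per step, $\bar{x}_n(k)$ changes only when $d(m)=k$, by $+1/M(k)$ on a $0\to1$ flip and $-1/M(k)$ on a $1\to0$ flip. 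Because the model's flip probabilities depend on $m$ only through $(d(m),\bar{x}_n)$, writing $\beta(k,\bar{x}_n)=\mathbb{P}(s_{n+1}^m=1\mid s_n^m=0,d(m)=k,\bar{x}_n)$ and $\gamma(k,\bar{x}_n)$ for the reverse flip, one gets
\[
\mathbb{E}[\bar{x}_{n+1}(k)-\bar{x}_n(k)\mid\mathcal{F}_n]=\frac{1}{M(k)}\bigl[\beta(k,\bar{x}_n)\,\Sigma^0_k-\gamma(k,\bar{x}_n)\,\Sigma^1_k\bigr],
\]
where $\Sigma^0_k$ (resp.\ $\Sigma^1_k$) is the total selection probability of the susceptible (resp.\ infected) degree-$k$ nodes. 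Everything reduces to computing $\Sigma^0_k,\Sigma^1_k$.

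For part~1 ($m\overset{d}{=}Y$) every degree-$k$ node is picked with the \emph{same} probability $k/\sum_v d(v)=k/(M\bar{k})$, so $\Sigma^0_k=(1-\bar{x}_n(k))M(k)\cdot k/(M\bar{k})$ and likewise for $\Sigma^1_k$. Substituting, the factor $M(k)$ cancels and the definitions \eqref{eq:scaled_transition_prob_1}--\eqref{eq:scaled_transition_prob_2} collapse the drift to $\tfrac{1}{M}\tfrac{k}{\bar{k}}\bigl(P_{01}(k,\bar{x}_n)-P_{10}(k,\bar{x}_n)\bigr)$, which is exactly the increment prescribed by the mean-field recursion \eqref{eq:MFD_sampling_Y}; Theorem~\ref{th:MFD} then yields the claim. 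This part is routine precisely because selection is uniform inside each degree class.

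The hard part is part~2 ($m\overset{d}{=}Z$): a node $m$ is now selected with probability $\tfrac{1}{M}\sum_{x\sim m}1/d(x)$, which is \emph{not} constant across nodes of a fixed degree, so the drift no longer factors through a single per-degree rate. I would handle this with the mean-field (annealed) assumption that, within the degree-$k$ class, the infection indicator is uncorrelated with the neighbour-degree profile $\sum_{x\sim m}1/d(x)$; this lets me replace $\Sigma^0_k$ by $(1-\bar{x}_n(k))$ times the class total $\tfrac{1}{M}\sum_{m:d(m)=k}\sum_{x\sim m}1/d(x)$, and similarly for $\Sigma^1_k$. The remaining step is a combinatorial edge count: grouping the edges incident to degree-$k$ nodes by the degree $k'$ of the far endpoint, and using that the number of edges joining degree-$k$ and degree-$k'$ nodes equals $k'M(k')P(k\mid k')$ (the definition of $P(k\mid k')$ together with edge symmetry), gives $\sum_{m:d(m)=k}\sum_{x\sim m}1/d(x)=\sum_{k'}M(k')P(k\mid k')$. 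Dividing by $M(k)=MP(k)$ turns the class total into $\sum_{k'}\tfrac{P(k')}{P(k)}P(k\mid k')$, so the drift becomes $\tfrac{1}{M}\bigl(\sum_{k'}\tfrac{P(k')}{P(k)}P(k\mid k')\bigr)\bigl(P_{01}(k,\bar{x}_n)-P_{10}(k,\bar{x}_n)\bigr)$, matching the recursion \eqref{eq:MFD_sampling_Z}; Theorem~\ref{th:MFD} again upgrades this drift to the stated approximation. I expect this decorrelation-plus-edge-count step to be the main obstacle, since it is the one place where the heterogeneity of the per-node selection rate must be averaged away.

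Finally I would verify the degree-uncorrelated reduction, both as a consistency check and to pin down the orientation of the ratio in the coefficient: when $P(k\mid k')$ is independent of $k'$ it must equal $kP(k)/\bar{k}$, and then $\sum_{k'}\tfrac{P(k')}{P(k)}P(k\mid k')=\tfrac{k}{\bar{k}}\sum_{k'}P(k')=\tfrac{k}{\bar{k}}$, which is exactly the coefficient in \eqref{eq:MFD_sampling_Y}. Hence \eqref{eq:MFD_sampling_Z} collapses to \eqref{eq:MFD_sampling_Y}, as asserted; note that only the orientation $P(k')/P(k)$ (not its reciprocal) makes this reduction go through.
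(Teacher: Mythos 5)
Your proof is correct and takes essentially the same route as the paper's: the paper also passes to the one-step conditional drift of the population-state Markov chain, writing $\bar{x}_{n+1}(k) = \bar{x}_n(k) + \frac{1}{M(k)}(A-B) + \gamma_n$ with $A = \mathbb{P}(s_{n+1}^{(m)}=1, s_n^{(m)}=0, d(m)=k \,\vert\, \bar{x}_n)$, and then asserts that both parts follow by decomposing $A,B$ with respect to the degree distributions of $Y$ and $Z$ — your computation is a fully worked-out version of that sketch, and it additionally makes explicit two things the paper glosses over: the within-class decorrelation assumption needed for $Z$-sampling (since $\frac{1}{M}\sum_{x\sim m}1/d(x)$ is not constant over degree-$k$ nodes) and the edge-count identity $\sum_{m:d(m)=k}\sum_{x\sim m}1/d(x)=\sum_{k'}M(k')P(k\vert k')$. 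The one substantive discrepancy: the coefficient you derive in part 2, $\sum_{k'}\frac{P(k')}{P(k)}P(k\vert k')=\mathbb{P}(d(Z)=k)/P(k)$, is the reciprocal orientation of the printed coefficient $\sum_{k'}\frac{P(k)}{P(k')}P(k\vert k')$ in (\ref{eq:MFD_sampling_Z}). Yours is the correct one: the drift multiplier must be the ratio of the sampling degree distribution to $P(k)$ (in part 1 it is $\frac{kP(k)/\bar{k}}{P(k)}=\frac{k}{\bar{k}}$), and only your orientation makes the theorem's own reduction claim hold — for an uncorrelated network with $P(k\vert k')=kP(k)/\bar{k}$ independent of $k'$, the printed coefficient evaluates to $\frac{kP(k)^2}{\bar{k}}\sum_{k'}\frac{1}{P(k')}$, which is not $\frac{k}{\bar{k}}$ unless $P$ is uniform, whereas yours collapses to $\frac{k}{\bar{k}}$ exactly. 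So the printed statement contains a typo that your consistency check correctly detects and repairs; your argument establishes the corrected form of the result.
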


\begin{proof}
Note that the population state $\{\bar{x}_{n}\}_{n_\geq 0 }$ is a Markov chain with a state space of the size ${\Pi_{d = 1}^M (M(d) + 1)}$. Let $P^{pop}$ denote the transition probability matrix of this Markov chain and $e_i$ denote the ${\Pi_{d = 1}^M (M(d) + 1)}$ dimensional column vector with $1$ in the $i^{th}$ position and zeros in all other positions. Then, the Martingale representation of this Markov chain is,
\begin{equation}
\label{eq:martingale_rep_1}
\rho_{n+1} = (P^{pop})' \rho_n + \eta_n 
\end{equation}
where, $\rho_n$ are states taking values in the space $\{e_1,\dots,e_{\Pi_{d = 1}^M (M(d) + 1)}\}$,
$\eta_n$ is martingale difference noise. Then, by multiplying with the state level matrix, we get
\begin{align}
\bar{x}_{n+1}(k) = \mathbb{E}\{\bar{x}_n(k)\vert \bar{x}_n\} + \gamma_n
\end{align}
where, $\gamma_n$ is the product of martingale difference noise $\eta_n$ and state level matrix. Then,
\begin{align}
&\bar{x}_{n+1}(k) = \mathbb{E}\{\bar{x}_n(k)\vert \bar{x}_n\} + \gamma_n\\
&= \mathbb{P}(s_{n+1}^{(m)} = 1, s_{n}^{(m)} = 0, d(m) = k \vert \bar{x}_n) \times (\bar{x}_n(k) + \frac{1}{M(k)}) +\nonumber\\ &\hspace{0.45cm}\mathbb{P}(s_{n+1}^{(m)} = 0, s_n^{(m)} = 1, d(m) = k \vert \bar{x}_n) \times (\bar{x}_n(k) - \frac{1}{M(k)}) +\nonumber\\
& \hspace{1.5cm}(1 - \mathbb{P}(s_{n+1}^{(m)} = 1, s_n^{(m)} = 0, d(m) = k \vert \bar{x}_n) - \nonumber\nonumber\\
& \hspace{0.45cm}\mathbb{P}(s_{n+1}^{(m)} = 0, s_{n}^{(m)} = 1, d(m) = k \vert \bar{x}_n)) (\bar{x}_n(k))
+ \gamma_n
\end{align}
Let,
\begin{align}
A &= \mathbb{P}(s_{n+1}^{(m)} = 1, s_{n}^{(m)} = 0, d(m) = k \vert \bar{x}_n) \nonumber\\
B &= \mathbb{P}(s_{n+1}^{(m)} = 0, s_{n}^{(m)} = 1, d(m) = k \vert \bar{x}_n) \nonumber.
\end{align}
Then, we get
\begin{align}
\bar{x}_{n+1}(k) &= \bar{x}_{n}(k) + \frac{1}{M(k)}(A - B) + \gamma_n.
\end{align}
Then, the first and second parts of the Theorem \ref{thm:MFD_samping_YZ} follow by decomposing the joint distributions of $A, B$ with respective to the degree distributions of a random friend $Y$ and a random friend $Z$ of a random node $X$ respectively. 

\end{proof}

Theorem \ref{thm:MFD_samping_YZ} shows that, if the node $m$ sampled in the step 1 of the SIS model (explained in Sec. \ref{subsec:SIS_model}), is chosen to be a random friend or a random friend of a random node, then different elements $x_n(k)$ of the mean-field approximation evolves at different rates. This result allows us to model the dynamics of the population state in the more involved case where, frequency of the evolution of an individual is proportional his/her degree (part 1 - e.g. high degree nodes change opinions more frequently due to higher exposure) and also depends on the degree correlation (part 2 - e.g. nodes being connected to other similar/different degree nodes changes the frequency of changing the opinion).

\vspace{0.25cm}
\begin{remark} [Invariance of the critical thresholds to the sampling distribution in step 1] 
	\normalfont
	The stationary condition for the mean-field dynamics is obtained by setting $x_{n+1}(k) - x_{n}(k) = 0$ for all $k \geq 1$. Comparing (\ref{eq:MFD_approximation_X}) with (\ref{eq:MFD_sampling_Y}) and (\ref{eq:MFD_sampling_Z}), it can be seen that this condition yields the same expression $P_{01}(k, x_n) - P_{10}(k, x_n) = 0$, for all three sampling methods (random node - $X$, random end of a random link $Y$ and, a random neighbor $Z$ of a random node). Hence, the critical thresholds of the SIS model are invariant to the distribution from which the node $m$ is sampled in step 1. This leads us to Sec. \ref{sec:critical_thersholds} where, modifications to the step 2 of the SIS model are analyzed in terms of the critical thresholds. 
\end{remark}

\section{Critical Thresholds for Unbiased-degree Networks}
\label{sec:critical_thersholds}
In Sec. \ref{sec:effects_step1} of this paper, we focused on the step 1 of the SIS model and, showed that different sampling methods for selecting the node $m$ result in different mean-field dynamics with the same stationary conditions. In contrast, the focus of this section is on the step 2 of the SIS model and, how changes to this step would result in different stationary conditions and critical thresholds. 
\subsection{Critical Thresholds for Monophilic and Non-Monophilic Contagions}

Recall the SIS model reviewed in Sec. \ref{subsec:SIS_model} again. We limit our attention to the case of \textit{unbiased-degree} networks and viral adoption rules discussed in \cite{lopez2016overview}. 

\vspace{0.25cm}
\noindent
{\bf Unbiased-degree network:} In an unbiased-degree network, neighbors of agent $m$ sampled in the step 1 of the SIS model are $d(m)$ (degree of agent $m$) number of uniformly sampled agents (similar in distribution to the random variable $X$) from the network. Therefore, in an unbiased-degree network, any agent is equally likely to be a neighbor of the sampled (in the step 1 of the SIS model) agent $m$.

\vspace{0.25cm}
\noindent
{\bf Viral adoption rules\footnote{	The above two rules are called viral adoption rules as they consider the total number of infected nodes (denoted by $a^X_m$ and $a^Z_m$ in case 1 and case 2 respectively) in the sample in contrast to the persuasive adoption rules that consider the fraction of infected nodes in the sample \cite{lopez2012influence}.}:}  If the sampled agent $m$ (in the step 1 of the SIS model) is an infected agent, she becomes susceptible with a constant probability $\delta$.  If the sampled agent $m$ (in the step 1 of the SIS model) is a susceptible (state $0$) agent, she samples $d(m)$ (degree of $m$) number of other agents $X_1, X_2, \dots, X_{d(m)}$ (neighbors of $m$ in the unbiased-degree network) from the network and, adopts the contagion based on one of the following rules:
\begin{compactenum}
	\item [ Case 1 - Non-monophilic adoption rule:] For each sampled neighbor $X_i$, $m$ observes the state of $X_i$. Hence, agent $m$ observes the states of $d(m)$ number of random nodes. Let $a^X_m$ denote the number of infected agents among $X_1,\dots, X_{d(m)}$. Then, the susceptible agent $m$ becomes infected with probability $\nu\frac{a^X_m}{D}$ where, $0 \leq \nu\leq 1$ is a constant and $D$ is the largest degree of the network.
	
	\vspace{0.2cm}
	\item[ Case 2 - Monophilic adoption rule:] For each sampled neighbor $X_i$, $m$ observes the state of a random friend $Z_i\in \mathcal{N}(X_i)$ of that neighbor. Hence, agent $m$ observes the states of $d(m)$ number of random friends $Z_1,\dots, Z_{d(m)}$ of random nodes $X_1,\dots,X_{d(m)}$. Let $a^Z_m$ be the number of infected agents among $Z_1, \dots, Z_{d(m)}$. Then, the susceptible agent $m$ becomes infected with probability $\nu\frac{a^Z_m}{D}$ where, $0 \leq \nu\leq 1$ is a constant and $D$ is the largest degree of the network.
\end{compactenum}

In order to compare the effects of non-monophilic and monophilic adoption rules, we look at the conditions on the model parameters for which, each rule leads to a positive fraction of infected nodes starting from a small fraction of infected nodes i.e. a positive stationary solution to the mean-field dynamics (\ref{eq:MFD_approximation_X}). Our main result is the following:
\begin{theorem}
	\label{thm:observe_XZ}
	Consider the SIS model described in Sec. \ref{subsec:SIS_model}. Define the effective spreading rate as $\lambda = \frac{\nu}{\delta}$ and let $X$ be a random node and $Z$ be a random friend of $X$.
	\begin{compactenum}
		\item Under the non-monophilic adoption rule (Case 1), the mean-field dynamics equation ($\ref{eq:MFD_approximation_X}$) takes the form,
		\begin{align}
			\label{eq:MFD_approximation_X_observe_X}
			x_{n+1} (k) &= x_{n}(k) + \frac{1}{M}\big(	(1-{x}_n(k))\frac{\nu k \theta_n^X}{D}- x_n(k)\delta\big)\\
			\quad \text{where},\nonumber\\
			\theta_n^X &= \sum_{k}P(k) x_n(k) 
		\end{align} is the probability that a randomly chosen node $X$ at time $n$ is infected. Further, there exists a positive stationary solution to the mean field dynamics (\ref{eq:MFD_approximation_X_observe_X}) for case 1 if and only if
		\begin{equation}
		\label{eq:diff_threshold_X}
		\lambda > \frac{D}{\mathbb{E}\{d(X)\}} = \lambda^*_X
		\end{equation}
		
		\item Under the monophilic adoption rule (Case 2), the mean-field dynamics equation (\ref{eq:MFD_approximation_X}) takes the form,
		\begin{align}
			\label{eq:MFD_approximation_X_observe_Z}
			x_{n+1} (k) &= x_{n}(k) + \frac{1}{M}\big(	(1-{x}_n(k))\frac{\nu k \theta_n^Z}{D}- x_n(k)\delta	\big)\\
			\text{where,}\nonumber\\
			\theta_n^Z &= \sum_{k}\bigg(\sum_{k'}P(k')P(k|k')\bigg) x_n(k)
		\end{align} is the probability that a randomly chosen friend $Z$ of a randomly chosen node $X$ at time $n$ is infected\footnote{\label{fn:jdd}We use $P(k|k')$ to denote the conditional probability that a node with degree $k'$ is connected to a node with degree $k$. More specifically $P(k|k') = \frac{e(k,k')}{q(k)}$ where $e(k,k')$ is the joint degree distribution of the network and $q(k)$ is the marginal distribution that gives the probability of random end (denoted by random variable $Y$ in Theorem \ref{th:friendship_paradox}) of random link having degree $k$. We also use $\sigma_q$ to denote the variance of $q(k)$ in subsequent sections.}. Further, there exists a positive stationary solution to the mean field dynamics (\ref{eq:MFD_approximation_X_observe_Z}) if and only if
		\begin{equation}
		\label{eq:diff_threshold_Z}
		\lambda > \frac{D}{\mathbb{E}\{d(Z)\}}= \lambda^*_Z
		\end{equation}
	\end{compactenum}
\end{theorem}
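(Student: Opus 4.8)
The plan is to instantiate the general mean-field recursion of Theorem~\ref{th:MFD} by computing the scaled transition probabilities $P_{01}(k,x_n)$ and $P_{10}(k,x_n)$ under each adoption rule, and then to carry out a fixed-point analysis of the resulting stationary condition. Both the recovery probability and the infection probability must be evaluated explicitly; once these are in hand, the two displayed mean-field equations follow by direct substitution into \eqref{eq:MFD_approximation_X}.

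First I would handle the recovery term, which is common to both cases. By the viral adoption rule an infected agent $m$ becomes susceptible with the constant probability $\delta$, so $\mathbb{P}(s_{n+1}^{m}=0\mid s_n^m=1,d(m)=k,x_n)=\delta$ and hence $P_{10}(k,x_n)=\delta\,x_n(k)$. For the infection term in Case~1, the susceptible agent $m$ of degree $k$ observes $k$ uniformly sampled nodes (the unbiased-degree assumption), each infected with probability $\theta_n^X=\sum_k P(k)x_n(k)$; by linearity of expectation the expected number of infected observations is $k\theta_n^X$, giving $\mathbb{P}(s_{n+1}^m=1\mid s_n^m=0,d(m)=k,x_n)=\nu k\theta_n^X/D$ and $P_{01}(k,x_n)=(1-x_n(k))\,\nu k\theta_n^X/D$. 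Case~2 is identical except that each of the $k$ observations is now a random friend $Z_i$ of a random node, infected with probability $\theta_n^Z=\sum_k\big(\sum_{k'}P(k')P(k|k')\big)x_n(k)$; substituting $\theta_n^Z$ for $\theta_n^X$ yields the second recursion. This establishes the two displayed mean-field equations.

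For the threshold conditions I would set $x_{n+1}(k)=x_n(k)$ and solve the stationary equation $(1-x(k))\nu k\theta/D=\delta x(k)$ for each $k$, obtaining $x(k)=\nu k\theta/(\nu k\theta+D\delta)$, where $\theta$ denotes $\theta^X$ or $\theta^Z$ as appropriate. Substituting this back into the definition of $\theta$ produces a single self-consistency equation $\theta=f(\theta)$, with $f(\theta)=\sum_k P(k)\,\nu k\theta/(\nu k\theta+D\delta)$ in Case~1 and the analogous sum weighted by $\sum_{k'}P(k')P(k|k')$ in Case~2. The task is then to determine when $\theta=f(\theta)$ admits a solution with $\theta>0$, noting that $\theta=0$ is always a disease-free fixed point.

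The main obstacle is this fixed-point existence argument, where the structure of $f$ must be exploited carefully. I would show that $f$ is strictly increasing, strictly concave on $[0,\infty)$, satisfies $f(0)=0$, and is bounded above by $1$ since each summand is a probability-weighted fraction strictly below $1$. Strict concavity together with $f(0)=0$ forces the graph of $f$ to lie strictly below its tangent line $y=f'(0)\theta$ at the origin for $\theta>0$; hence if $f'(0)\le 1$ then $f(\theta)<\theta$ for all $\theta>0$ and no positive crossing exists, whereas if $f'(0)>1$ the curve starts above the diagonal and, being bounded by $1$, must return to meet $y=\theta$ by the intermediate value theorem. Differentiating under the sum gives $f'(0)=\frac{\nu}{D\delta}\sum_k k P(k)=\frac{\nu}{D\delta}\mathbb{E}\{d(X)\}$ in Case~1, so $f'(0)>1$ rearranges exactly to $\lambda>D/\mathbb{E}\{d(X)\}=\lambda^*_X$; the identical computation in Case~2, using $\sum_k k\sum_{k'}P(k')P(k|k')=\mathbb{E}\{d(Z)\}$, delivers $\lambda>D/\mathbb{E}\{d(Z)\}=\lambda^*_Z$. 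The resulting comparison $\lambda^*_Z\le\lambda^*_X$ then follows from $\mathbb{E}\{d(Z)\}\ge\mathbb{E}\{d(X)\}$, which is precisely the content of Theorem~\ref{th:friendship_paradox_v2_cao}.
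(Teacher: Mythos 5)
Your proposal is correct and takes essentially the same approach as the paper's proof: compute the recovery probability $\delta$ and the infection probability $\nu k\theta_n/D$ (the paper writes out the binomial expectation where you invoke linearity of expectation, which is the same computation), substitute into \eqref{eq:MFD_approximation_X}, and analyze the stationary self-consistency equation $\theta = f(\theta)$ via concavity of the fixed-point map and its slope at the origin. If anything, you make the ``only if'' direction more explicit (strict concavity placing $f$ below the tangent line when $f'(0)\leq 1$) than the paper, which simply asserts that a positive solution requires $\frac{dH}{d\rho}\big\vert_{\rho=0}>1$.
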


\begin{proof}
\noindent
{\bf Part 1: Non-monophilic adoption rule:} The proof of the first part is inspired by \cite{lopez2016overview,lopez2012influence} that consider the unbiased degree networks with non-monophilic adoption rules with continuous-time evolutions (as opposed to the discrete time case considered here). The main purpose of the first part is to provide a comparison of the non-monophilic adoption rule with the monophilic adoption rule (part 2). Consider the mean-field dynamics given in \ref{eq:MFD_approximation_X}. The probability of a susceptible agent agent $m$ (with degree $d(m) = k$) sampled at time instant $n$ for the unbiased degree network adopting the contagion can be derived as follows:
\begin{align}
\label{eq_proof:transition01_prob_binom_expectation}
&\mathbb{P}(s_{n+1}^{m} = 1 \vert s_{n}^{m} = 0, d(m) = k, \bar{x}_n ) =\\
&\hspace{1cm} \sum_{a = 1}^{k} \frac{\nu a}{D} \binom{k}{a} (\theta_n^X)^a (1 - \theta_n^X)^{(k-a)} =\frac{\nu k \theta_n^X}{D}
\end{align}
where,
\begin{equation}
\theta_n^X = \sum_{k}P(k) x_n(k) 
\end{equation} 
is the probability that a randomly chosen node $X$ is infected at time instant $n$. Eq.	(\ref{eq_proof:transition01_prob_binom_expectation}) is based on the following argument. The neighbors of $m$ (in the case of non-monophilic adoption rule) are are randomly sampled nodes ($X$) and therefore, the number $a$ of infected neighbors (out of $k$ total) follows a binomial distribution with parameter $\theta_n^X$. Since the probability of being infected when a susceptible node $m$ has $a$ (out of $k$) infected neighbors is $\frac{\nu a}{D}$, the probability of a degree $k$ susceptible node becoming infected at time $m$ is the expectation of $\frac{\nu a}{D}$ with respect to the binomial distribution as calculated in $(\ref{eq_proof:transition01_prob_binom_expectation})$.
Further, the probability that an infected node (independent of the degree) becomes susceptible is $\delta$ as assumed in the viral adoption rule i.e.
\begin{equation}
\label{eq_proof:transition10}
\mathbb{P}(s_{n+1}^{m} = 0 \vert s_{n}^{m} = 1, d(m) = k, \bar{x}_n ) = \delta.
\end{equation}
Then, substituting (\ref{eq_proof:transition10}), (\ref{eq_proof:transition01_prob_binom_expectation}) in to the mean-field dynamics \ref{eq:MFD_approximation_X} yields the mean-field dynamics approximation (\ref{eq:MFD_approximation_X_observe_X}) for non-monophilic adoption rules.

In order to obtain the critical thresholds for non-monophilic case, consider the stationary condition of \ref{eq:MFD_approximation_X_observe_X}, $x_{n+1}(k) - x_{n}(k) = 0$, which yields the stationary population state $x(k)$ to follow,
\begin{align}
&x(k) 
= \frac{\frac{\nu k \theta^X}{D}}{\frac{\nu k \theta^X}{D} + \delta}
\label{eq_proof:stationary_pop_state_non_mono}
=\frac{\lambda k \theta^X}{\lambda k \theta^X + D}\\
\noindent \text{where,}\nonumber\\ 
\label{eq_proof:stationary_infected_fraction}
&\theta^X = \sum_{k}P(k)x(k)= \rho 
\end{align}
is the probability that a randomly chosen node $X$ is infected during stationary state (defined in \ref{eq:stationary_infected_fraction}). Then, by substituting (\ref{eq_proof:stationary_pop_state_non_mono}) in (\ref{eq_proof:stationary_infected_fraction}), we get
\begin{align}
\label{eq_proof:H_expression_non_mono}
\rho = \sum_{k}P(k)\frac{\lambda k \rho}{\lambda k \rho + D} =H^{X}_{\lambda, P}(\rho).
\end{align}

The diffusion prevails (without dying away) when (\ref{eq_proof:H_expression_non_mono}) has a positive solution. Further, (\ref{eq_proof:H_expression_non_mono}) is an increasing, concave function with $H^{X}_{\lambda, P}(0) = 0$ and
$H^{X}_{\lambda, P}(1) < 1$.

Hence, in order for the (\ref{eq_proof:H_expression_non_mono}) to have a positive solution,
\begin{align}
\frac{dH^{X}_{\lambda, P}(\rho)}{d\rho}\Bigg\vert_{\rho = 0} &> 1,
\end{align}
which then yields,
\begin{align}
\lambda &> \frac{D}{\sum_{k}kP(k)} = \frac{D}{\mathbb{E}d(X)}
\end{align}

\noindent
{\bf Part 2: Monophilic Adoption Rule:} Consider the mean-field dynamics given in \ref{eq:MFD_approximation_X}. By following steps similar to the part 1 of the proof, The probability of a susceptible agent agent $m$ (with degree $d(m) = k$) sampled at time instant $n$ for the unbiased degree network adopting the contagion is:
\begin{align}
\label{eq_proof:transition01_prob_binom_expectation_Monophilic}
\mathbb{P}(s_{n+1}^{m} = 1 \vert s_{n}^{m} = 0, d(m) = k, \bar{x}_n ) &=\frac{\nu k \theta_n^Z}{D}
\end{align}
where,
\begin{equation}
\theta_n^Z = \sum_{k}\bigg(\sum_{k'}P(k')P(k|k')\bigg) x_n(k)
\end{equation} 
is the probability that a randomly chosen friend $Z$ of a randomly chosen node $X$ is infected at time instant $n$. To understand how the expression for $\theta_n^Z$ is derived, recall that $P(k|k')$ is the probability that a randomly chosen friend of a node with degree $k'$ is degree $k$. Then, taking expectation of $P(k|k')$ with respect to probability of sampling a node with degree $k'$ yields $\sum_{k'}P(k')P(k|k')$ as the probability of a random friend $Z$ (of a random node $X$) having a degree $k$.

Further, the probability that an infected node (independent of the degree) becomes susceptible is $\delta$ as assumed in the viral adoption rule i.e.
\begin{equation}
\label{eq_proof:transition10_mono}
\mathbb{P}(s_{n+1}^{m} = 0 \vert s_{n}^{m} = 1, d(m) = k, \bar{x}_n ) = \delta.
\end{equation}
Then, substituting (\ref{eq_proof:transition10_mono}), (\ref{eq_proof:transition01_prob_binom_expectation_Monophilic}) in to the mean-field dynamics \ref{eq:MFD_approximation_X} yields the mean-field dynamics approximation (\ref{eq:MFD_approximation_X_observe_Z}) for non-monophilic adoption rules.

In order to obtain the critical thresholds for non-monophilic case, consider the stationary condition of \ref{eq:MFD_approximation_X_observe_Z}, $x_{n+1}(k) - x_{n}(k) = 0$, which yields the stationary population state $x(k)$ to follow,
\begin{align}
x(k) 
&= \frac{\frac{\nu k \theta^Z}{D}}{\frac{\nu k \theta^Z}{D} + \delta}
\label{eq_proof:stationary_pop_state_mono}
=\frac{\lambda k \theta^Z}{\lambda k \theta^Z + D}\\
\noindent \text{where,}\nonumber \\ 
\label{eq_proof:stationary_infected_fraction_mono}
\theta^Z &= \sum_{k}\bigg(\sum_{k'}P(k')P(k|k')\bigg) x(k)
\end{align}
is the probability that a random friend $Z$ of a randomly node $X$ is infected during stationary state (defined in \ref{eq:stationary_infected_fraction}). Then, by substituting (\ref{eq_proof:stationary_pop_state_mono}) in (\ref{eq_proof:stationary_infected_fraction_mono}), we get
\begin{align}
\label{eq_proof:H_expression_mono}
\theta^Z &= \sum_{k}\bigg(\sum_{k'}P(k')P(k|k')\bigg)\frac{\lambda k \theta^Z}{\lambda k \theta^Z + D} =H^{Z}_{\lambda, P}(\theta^Z).
\end{align}

The diffusion prevails (without dying away) when (\ref{eq_proof:H_expression_mono}) has a positive solution. Further, (\ref{eq_proof:H_expression_mono}) is an increasing, concave function of $\theta^Z$ with
$H^{Z}_{\lambda, P}(0) = 0$ and $H^{Z}_{\lambda, P}(1)  < 1$. 

Hence, in order for the (\ref{eq_proof:H_expression_mono}) to have a positive solution,
\begin{align}
\frac{dH^{Z}_{\lambda, P}(\theta^Z)}{d\theta^Z}\Bigg\vert_{\theta^Z = 0} &> 1,
\end{align}
which then yields,
\begin{align}
\lambda &> \frac{D}{\sum_{k}k(\sum_{k'}P(k')P(k|k')) }= \frac{D}{\mathbb{E}d(Z)}.
\end{align}
\end{proof}
The infection spreading under the monophilic adoption rule (Case 2 of Theorem \ref{thm:observe_XZ}) can also be thought of as representing the network by the square graph (corresponding to the square of the adjacency matrix of the original network). Proceeding that way would also yield the same critical threshold as in the Case 2 of Theorem \ref{thm:observe_XZ}. Theorem \ref{thm:observe_XZ} allows us to analyze the effects of friendship paradox and degree-assortativity on the contagion process as discussed in the next subsection.

\subsection{Effects of Friendship Paradox and Degree Correlation on the Monophilic Contagion}

\begin{figure*}[]
	\centering
	\begin{subfigure}[!h]{0.45\textwidth}
		\centering
		\includegraphics[height = 3.5in, width = 3.6in]{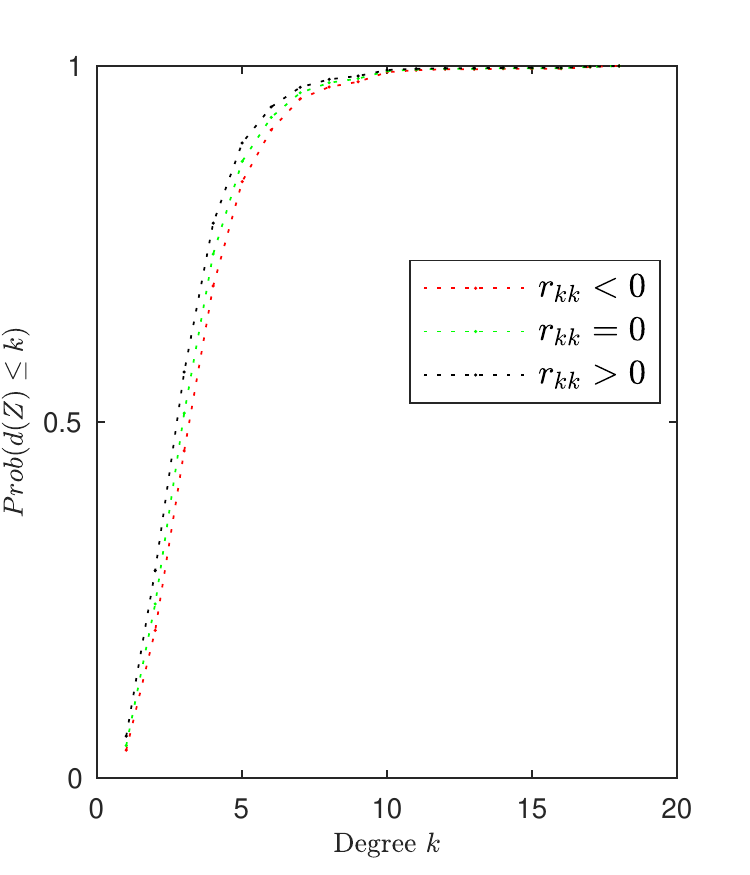}
		\caption{CDFs of the of the degree $d(Z)$  of a random friend $Z$ of a random node for three networks with same degree distribution but different assortativity $r_{kk}$ values. Note that the CDFs are point-wise increasing with $r_{kk}$ showing that $\mathbb{E}\{d(Z)\}$ decreases with $r_{kk}$.}
		\label{subfig:cdf_degree_z_with_rkk}
	\end{subfigure}\hfill
	\begin{subfigure}[!h]{0.45\textwidth}
		\centering
		\includegraphics[height = 3.5in, width = 3.5in]{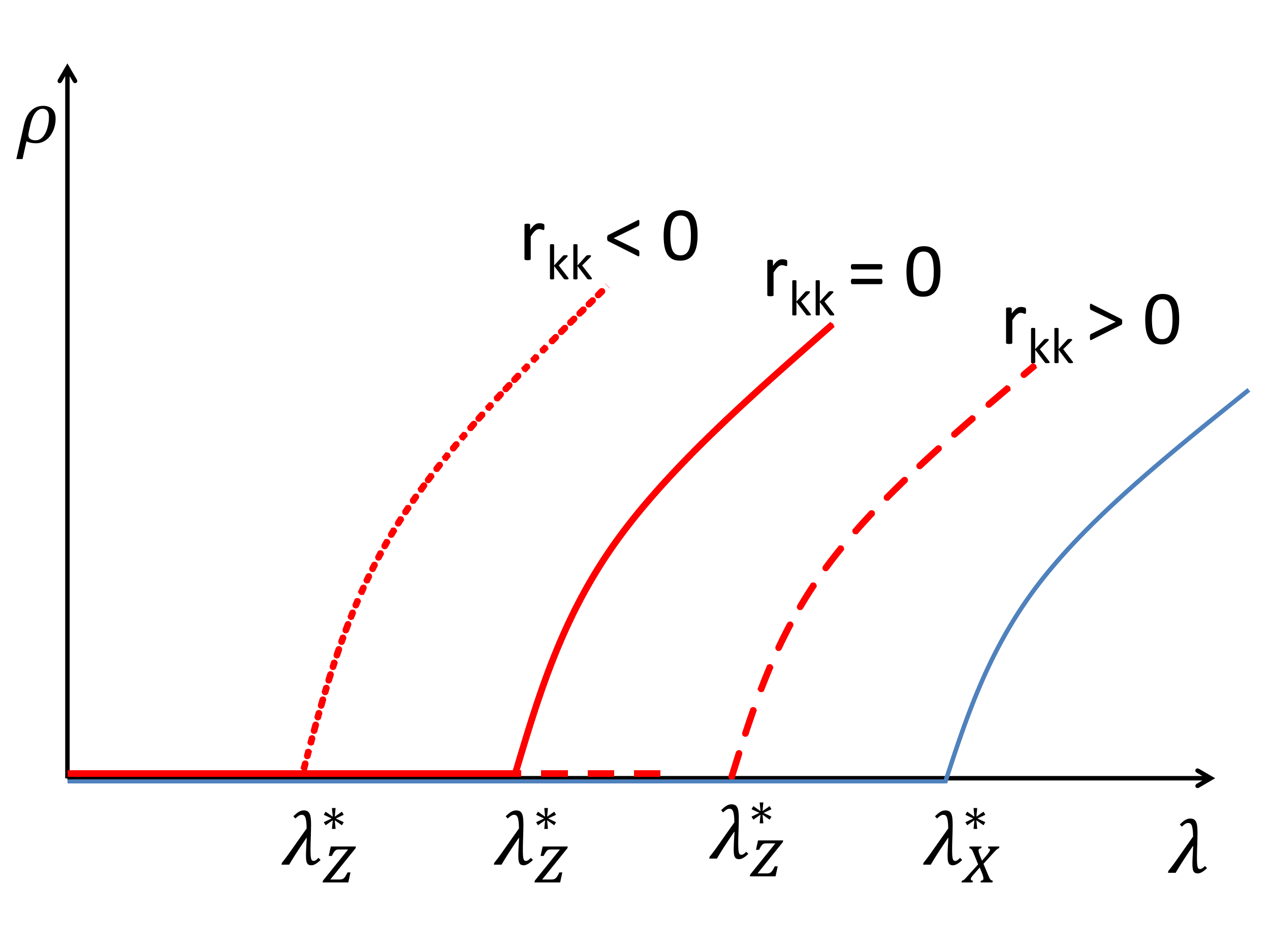}
		\caption{Variation of the stationary fraction $\rho$ of infected nodes with the effective spreading rate $\lambda$ for the case 1 (blue) and case 2 (red), illustrating the ordering of the critical thresholds of cases 1,2 and the effect of assortativity.}
		\label{subfig:stationary_infected_fraction}
	\end{subfigure}\hfill
	\caption{Comparison of non-monophilic and monophilic adoption rules and the effect of assortativity on the critical thresholds of the monophilic adoption rule.}
	\label{fig:assortativity_effect}
\end{figure*}

Theorem \ref{thm:observe_XZ} showed that the critical thresholds of the mean-filed dynamics equation (\ref{eq:MFD_approximation_X}) for the two adoption rules (non-monophilic and monophilic contagion) are different. Following is an immediate corollary of Theorem \ref{thm:observe_XZ} which gives the ordering of these critical thresholds using the friendship paradox stated in Theorem \ref{th:friendship_paradox}. 
\begin{corollary}
	\label{cor:ordering_crtical_thresholds}
	The critical thresholds $\lambda^*_X, \lambda^*_Z$ in (\ref{eq:diff_threshold_X}), (\ref{eq:diff_threshold_Z}) for the cases of non-monophilic (case 1) and monophilic (case 2) adoption rules satisfy
	\begin{equation}
	\lambda^*_Z \leq \lambda^*_X.
	\end{equation}
\end{corollary}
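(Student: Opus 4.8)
The plan is to reduce the claimed ordering of thresholds to a comparison of expected degrees, and then to invoke the friendship paradox directly. From Theorem \ref{thm:observe_XZ} I already have the closed forms $\lambda^*_X = D/\mathbb{E}\{d(X)\}$ and $\lambda^*_Z = D/\mathbb{E}\{d(Z)\}$. Since $D$ is a fixed positive constant (the largest degree) and both expected degrees are strictly positive, the map $t \mapsto D/t$ is strictly decreasing on $(0,\infty)$. Hence the inequality $\lambda^*_Z \leq \lambda^*_X$ that I want is equivalent to the reversed inequality on the denominators, namely $\mathbb{E}\{d(Z)\} \geq \mathbb{E}\{d(X)\}$.

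The second step is to recognize that this expected-degree inequality is exactly the content of the friendship paradox. Here $X$ is a uniformly chosen node and $Z$ is a uniformly chosen neighbor of a uniformly chosen node, so Theorem \ref{th:friendship_paradox_v2_cao} applies verbatim and yields the first-order stochastic dominance $d(Z) \geq_{fosd} d(X)$. As recorded in the footnote accompanying that theorem, first-order stochastic dominance implies the ordering of means, so $\mathbb{E}\{d(Z)\} \geq \mathbb{E}\{d(X)\}$ follows immediately. Equivalently, one may unpack $\mathbb{E}\{d(Z)\} = \sum_k k \sum_{k'} P(k')P(k|k')$ and compare it with $\mathbb{E}\{d(X)\} = \sum_k k P(k)$, but the stochastic-dominance route is cleaner and avoids any bookkeeping with the joint degree distribution.

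Combining the two steps gives the chain $\lambda^*_Z = D/\mathbb{E}\{d(Z)\} \leq D/\mathbb{E}\{d(X)\} = \lambda^*_X$, which is the assertion. I do not expect any genuine obstacle: the corollary is essentially a one-line consequence of Theorem \ref{thm:observe_XZ} together with the friendship paradox, and the only point demanding a little care is to invoke the correct version of the paradox. The comparison is between a random neighbor $Z$ of a random node and a random node $X$, which is precisely Version 2 (Theorem \ref{th:friendship_paradox_v2_cao}); the reference to Theorem \ref{th:friendship_paradox} (Version 1, phrased in terms of a random edge-endpoint $Y$) would instead require an extra step relating $\mathbb{E}\{d(Z)\}$ back to $\mathbb{E}\{d(X)\}$ through the degree correlations, so I would cite Version 2 to keep the argument self-contained.
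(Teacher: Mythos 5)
Your proposal is correct and follows essentially the same route as the paper: the paper treats the corollary as an immediate consequence of Theorem \ref{thm:observe_XZ} combined with the friendship paradox, with no further argument given. The one substantive point of comparison is your choice of which version of the paradox to invoke, and yours is the more careful one. The paper's text cites Theorem \ref{th:friendship_paradox} (Version 1, which compares the degree $d(Y)$ of a random end $Y$ of a random \emph{edge} against $d(X)$), whereas the quantity appearing in $\lambda^*_Z = D/\mathbb{E}\{d(Z)\}$ is the expected degree of a random neighbor $Z$ of a random \emph{node}. Since $\mathbb{E}\{d(Y)\}$ and $\mathbb{E}\{d(Z)\}$ differ in general (the former is the size-biased mean $\mathbb{E}\{d(X)^2\}/\mathbb{E}\{d(X)\}$, while the latter depends on the degree correlations $P(k|k')$), Version 1 alone does not directly yield $\mathbb{E}\{d(Z)\} \geq \mathbb{E}\{d(X)\}$; the statement that closes the argument without extra bookkeeping is Theorem \ref{th:friendship_paradox_v2_cao}, i.e.\ $d(Z) \geq_{fosd} d(X)$, which implies the ordering of means — exactly as you use it. So your proof is not only correct but repairs a small imprecision in the paper's own justification.
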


Corollary \ref{cor:ordering_crtical_thresholds} shows that in the case of monophilic adoption rule, it is easier (smaller effective spreading rate) for the infection to spread to a positive fraction of the agents as a result of the friendship paradox. Hence, observing random friends of random neighbors for adopting a contagion makes it easier for the contagion to spread instead of dying away (in unbiased-degree networks). This shows how friendship paradox can affect the spreading of a contagion over a network. 

\begin{remark}
	If we interpret an individual's second-hop connections as weak-ties, then Theorem \ref{thm:observe_XZ} and Corollary \ref{cor:ordering_crtical_thresholds} can be interpreted as results showing the importance of weak-ties in contagions (in the context of a SIS model and an unbiased-degree network). See the seminal works in \cite{rapoport1953spread, granovetter1973strength} for the importance and definitions of weak-ties in the sociology context. 
\end{remark}

The ordering $\lambda^*_Z \leq \lambda^*_X$ of the critical thresholds in Corollary \ref{cor:ordering_crtical_thresholds} holds irrespective of any other network property. However, the magnitude of the difference of the critical thresholds $\lambda^*_X - \lambda^*_Z$ depends on the neighbor-degree correlation (assortativity) coefficient defined as,
\begin{equation}
\label{eq:deg_deg_corr}
r_{kk} = \frac{1}{\sigma_q^2}\sum_	{k,k'}kk'\Big(e(k,k')  - q(k)q(k')\Big)
\end{equation} using the notation defined in Footnote \ref{fn:jdd}. To intuitively understand this, consider a star graph that has a negative assortativity coefficient (as all low degree nodes are connected to the only high degree node). Therefore, a randomly chosen node $X$ from the star graph has a much smaller expected degree $\mathbb{E}\{d(X)\}$ than the expected degree $\mathbb{E}\{d(Z)\}$ of a random friend $Z$ of the random node $X$ compared to the case where the network has a positive assortativity coefficient. This phenomenon is further illustrated in Fig. \ref{fig:assortativity_effect} using three networks with the same degree distribution but different assortativity coefficients obtained using Newman's edge rewiring procedure  \cite{newman2002assortative}.  

Consider the stationary fraction of the infected nodes \begin{equation}
\label{eq:stationary_infected_fraction}
\rho = \sum_{k}P(k)x(k)
\end{equation} where $P(k)$ is the degree distribution and $x(k), k = 1,\dots, D$ are the stationary states of the mean-field dynamics in (\ref{eq:MFD_approximation_X}). Fig. \ref{subfig:stationary_infected_fraction} illustrates how the stationary fraction of the infected nodes varies with the effective spreading rate $\lambda$ for case 1 and 2, showing the difference between the two cases and the effect of assortativity.

\section{Collective Dynamics of SIS-Model and Reactive Networks}
\label{sec:active_networks}

So far in Sec. \ref{sec:effects_step1} and Sec. \ref{sec:critical_thersholds}, the underlying social network on which the contagion spreads was treated as a deterministic graph and, the mean-field dynamics equation (\ref{eq:MFD_approximation_X}) was used to approximate the SIS-model. In contrast, this section explores the more involved case where the underlying social network also randomly evolves at each time step $n$ (of the SIS-model) in a manner that depends on the population state $\bar{x}_n$. Our aim is to obtain a tractable model that represent the collective dynamics of the SIS-model and the evolving graph process. As explained in Sec. \ref{subsec:motivation} with examples, the motivation for this problem comes from the real world networks that evolves depending on the state of diffusions on them. In order to state the main result, we consider the following general setup.

\vspace{0.25cm}
\noindent
{\bf Reactive Network:} Consider a finite set $\mathcal{G} = \{\mathcal{G}_1,\dots,\mathcal{G}_N\}$  of $N$ graphs that has the same degree distribution $P(k)$ but different conditional degree distributions $P_{\mathcal{G}_1}(k|k'),\dots, P_{\mathcal{G}_N}(k|k')$. The Markovian graph process $\{G_n\}_{n\geq 0}$ with state space $\mathcal{G}$ evolves with the transition probabilities parameterized by the population state $\bar{x}_n$ i.e. $G_{n+1} \sim P_{\bar{x}_n}(\,\cdot\,|G_n)$ (assumed to be irreducible, positive recurrent with a unique stationary distribution $\pi_{\bar{x}_n}$) where, the parameterization by $\bar{x}_{n}$ represents the functional dependency of the graph process on the state of the SIS diffusion process i.e. a reactive network. For example, this may represent a network that (stochastically) changes its assortativity adversarially in order to prevent the diffusion, etc. as discussed in Sec. \ref{subsec:motivation}. 

In this context, our main result is the following.
\begin{theorem}[Collective Dynamics of SIS-model and Reactive Network]
	\label{th:active_network}
	Consider a reactive network (stated in Sec. \ref{sec:active_networks}) with transition probability matrix $P_{\bar{x}_n}(\,\cdot\,|G_n)$  (with stationary distribution $\pi_{\bar{x}_n}$)  parameterized by the population state $\bar{x}_n$. Let the $k^{th}$ element of the vector $H(G_n,x_n)$ be
	\begin{align}
		H_k(x_n,G_n) &= (1- x_n(k)) \frac{\nu k \theta_n^Z}{D} - x_n(k)\delta \quad \textit{where},\\
		\theta_n^Z &= \sum_{k}\bigg(\sum_{k'}P(k')P_{G_n}(k|k')\bigg) x_n(k).
	\end{align} Further, assume that $H(x,\mathcal{G}_i)$ is Lipschitz continuous in $x$ for all $\mathcal{G}_i\in \mathcal{G}$. Then, the sequence of the population state vectors $\{\bar{x}_n\}_{n\geq 0}$ of the of SIS diffusion over the reactive network converges weakly to the trajectory of the deterministic differential equation
	\begin{align}
		\label{eq:ODE_active_network}
		\frac{dx}{dt} &= \mathbb{E}_{G\sim \pi_G}\{H(x,G)\} \quad\text{with the algebraic constraint,}\\
		\label{eq:algebraic_constraint}
		P'_x\pi_x &= \pi_x.
	\end{align}
\end{theorem}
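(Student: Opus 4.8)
The plan is to recognize the coupled evolution as a stochastic approximation recursion driven by \emph{Markovian, state-dependent noise} and then to eliminate the fast graph variable by an averaging argument. First I would write down the stochastic recursion for the population state. Exactly as in the martingale representation underlying Theorem \ref{th:MFD} and Theorem \ref{thm:MFD_samping_YZ}, but now with the drift depending additionally on the current graph $G_n$ through the conditional degree distribution $P_{G_n}(k|k')$ entering $\theta_n^Z$, one obtains
\[
\bar{x}_{n+1} = \bar{x}_n + \tfrac{1}{M}\, H(\bar{x}_n, G_n) + \zeta_n ,
\]
where $\zeta_n$ is a martingale difference process with $\|\zeta_n\|_2 \leq \Gamma/M$. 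The bound is inherited verbatim from Theorem \ref{th:MFD}, because introducing the graph dependence modifies only the deterministic drift $H$ and not the $O(1/M)$ scaling of the noise. This is a stochastic approximation recursion with step size $\varepsilon = 1/M$.

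Next I would make the two-time-scale structure explicit. The iterate $\bar{x}_n$ is the \emph{slow} variable: it moves by only $O(1/M)$ per step. The graph $G_n$ is the \emph{fast} variable: it is a finite-state Markov chain with kernel $P_{\bar{x}_n}$ that mixes over $O(1)$ steps by the assumed irreducibility and positive recurrence. Consequently, over any window of $o(M)$ consecutive steps the slow variable is essentially frozen at some value $x$, while the fast chain equilibrates to the stationary distribution $\pi_x$ of the frozen kernel $P_x$, i.e.\ the distribution satisfying $P'_x \pi_x = \pi_x$. This is precisely the regime in which the averaging principle for stochastic approximation with Markov state-dependent noise (Kushner--Yin \cite{kushner2003}, Benaïm \cite{benaim2003deterministic}) applies.

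I would then introduce the piecewise-constant interpolation $x^M(\cdot)$ of the iterates on the scaled time $t = n/M$, establish tightness of the family $\{x^M(\cdot)\}$ (bounded drift, $O(1/M)$ increments, and martingale noise with summable quadratic variation), and identify every weak limit through the \emph{averaged} vector field. The key averaging step replaces the per-step drift $H(\bar{x}_n, G_n)$ by its stationary average
\[
\bar{H}(x) \;:=\; \sum_{i=1}^N \pi_x(\mathcal{G}_i)\, H(x, \mathcal{G}_i) \;=\; \mathbb{E}_{G\sim\pi_x}\{H(x,G)\}.
\]
The assumed Lipschitz continuity of each $H(\cdot,\mathcal{G}_i)$, together with the continuous dependence of $\pi_x$ on $x$ (from the Perron--Frobenius analysis of the irreducible kernels $P_x$), makes $\bar{H}$ Lipschitz, so the limiting ODE $dx/dt = \bar{H}(x)$ has a unique solution and weak convergence of $x^M(\cdot)$ to it follows. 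The algebraic constraint is then immediate: the averaging at each point $x$ of the trajectory is taken with respect to $\pi_x$, the stationary distribution of the kernel evaluated at that very $x$; writing $\pi_G = \pi_x$ yields exactly (\ref{eq:ODE_active_network})--(\ref{eq:algebraic_constraint}).

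The main obstacle is the state-dependent Markovian noise. One must rigorously show that the fast chain tracks the moving stationary distribution $\pi_{\bar{x}_n}$ and that the error incurred by the drift of $\bar{x}_n$ within each averaging window is negligible. This requires a mixing/ergodicity bound that is \emph{uniform in} $x$, so that the averaging error is controlled simultaneously across the trajectory; the finiteness of $\mathcal{G}$ together with irreducibility of each $P_x$ supplies such a uniform spectral gap, but justifying the interchange of the averaging limit with the weak-convergence limit is the technically delicate part of the argument.
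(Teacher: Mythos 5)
Your proposal is correct and follows essentially the same route as the paper: the paper likewise casts the coupled dynamics as a stochastic approximation recursion $x_{n+1} = x_n + H(x_n,G_n)$ and invokes a Kushner--Yin theorem for Markovian state-dependent noise, verifying boundedness of $H$, the averaging condition (yielding $h(x) = \mathbb{E}_{G\sim\pi_x}\{H(x,G)\}$ with $P'_x\pi_x = \pi_x$), and uniqueness of the ODE solution via Lipschitz continuity. The only difference is expository: you unpack the two-time-scale averaging, tightness, and uniform-mixing details that the paper leaves inside the cited black-box theorem.
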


\begin{proof}
The following result from \cite{kushner2003} will be used to establish the weak convergence of the sequence of population states $\{\bar{x}_n\}_{n \geq 0}$ in Theorem \ref{th:active_network}.

Consider the stochastic approximation recursion,
\begin{equation}
\label{eq:stochastic_approximation_proof}
\bar{x}_{n+1} = \bar{x}_{n} + \epsilon \mathcal{H}(\bar{x}_n,G_n), \quad n = 0,1,\dots
\end{equation} where $\epsilon > 0$, $\{G_n\}$ is a $\mathcal{G}$ valued random process and, $\bar{x}_n \in \mathbb{R}^M$ is the output of recursion at time $n = 0,1,\dots$.
Further, let 
\begin{equation}
\label{eq:interpolated_trajectory}
\bar{x}^\epsilon(t) = \bar{x}_n \text{ for } t \in [n\epsilon,n\epsilon + \epsilon], \quad n = 0,1,\dots,
\end{equation}
which is a piecewise constant interpolation of $\{\bar{x}_n\}$.
In this setting, the following result holds. 

\begin{theorem}
	\label{th:SA}
	Consider the stochastic approximation algorithm (\ref{eq:stochastic_approximation_proof}). Assume
	\begin{compactenum}
		\item[{\bf SA1:}] $\mathcal{H}(x, G)$ is uniformly bounded for all $x \in \mathbb{R}^M$ and $G \in \mathcal{G}$.
		
		\item[{\bf SA2:}] For any $l \geq 0$, there exists $h(x)$ such that
		\begin{equation}
		\frac{1}{N} \sum_{n = l}^{N+l-1} \mathbb{E}_l\{\mathcal{H}(x, G_n)\} \rightarrow h(x) \text{ as } N \rightarrow \infty.
		\end{equation} where, $\mathbb{E}_l \{\cdot\}$ denotes expectation with respect to the sigma algebra generated by ${\{G_n: n < l\}}$.\label{th:SA_condition_2}
		
		\item[{\bf SA3:}] The ordinary differential equation (ODE)
		\begin{equation}
		\label{eq:ODE}
		\frac{dx(t)}{dt} = h(x(t)),\quad  x(0) = \bar{x}_0
		\end{equation} has a unique solution for every initial condition. \label{th:SA_condition_3}
	\end{compactenum}

	Then, the interpolated estimates  $\theta^\epsilon(t)$ defined in (\ref{eq:interpolated_trajectory}) satisfies
	\begin{equation}
	\lim_{\epsilon \to 0} \mathbb{P}\big( \sup_{0 \leq t \leq T}  \vert \bar{x}^\epsilon(t) - x(t)  \vert \geq \eta \big) = 0 \text{ for all } T>0, \eta > 0
	\end{equation} where, $(t)$ is the solution of the ODE (\ref{eq:ODE}).
\end{theorem}

Next, we will use Theorem \ref{th:SA} to show how the dynamics of the population state can be approximated by and ODE with an algebraic constraint in the case of a reactive network. 

By Part 2 of Theorem \ref{th:MFD}, the stochastic dynamics of the state $\bar{x}_n$ can be replaced by their mean-field dynamics $x_n$ as follows:
\begin{align}
\label{eq_proof:MFD_active_network}
x_{n+1} = x_n + H(x_n,G_n) 
\end{align}
where $H(x_n, G_n)$ is as defined in Theorem \ref{th:active_network}. Note that (\ref{eq_proof:MFD_active_network}) resembles (\ref{eq:stochastic_approximation_proof}).

\vspace{0.2cm}
\noindent
{\bf SA1 condition} - Each element $H_k(x,G)$ of $H(x,G)$ (for any $x, G$ in the domain) is a difference of two values(each in the interval $[0,1]$). Hence, SA1 condition holds. 

\vspace{0.2cm}
\noindent
{\bf SA2 condition} - As a result of the law of large numbers of the Markovian graph process $\{G_n\}$, SA 2 holds with
\begin{equation}
h(x) = \mathbb{E}_{G\sim \pi_x}\{H(x,G)\} 
\end{equation}
where,
$\pi_x$ is the unique stationary distribution satisfying $P'_x\pi_x = \pi_x$.

\vspace{0.2cm}
\noindent
{\bf SA3 condition} - Lipschitz continuity of $h(x)$ is a sufficient condition for the existence of a unique solution for a non-linear ODE. Hence, SA3 condition holds. 

Therefore, the result follows from Theorem \ref{th:SA}.
\end{proof}

\noindent
{\bf Importance of Theorem \ref{th:active_network} from a statistical modeling and signal processing perspective:} Theorem \ref{th:active_network} shows that the dynamics of the population state of the SIS diffusion over a reactive network can be approximated by an ODE with an algebraic constraint. From a statistical modeling perspective, Theorem \ref{th:active_network} provides a useful means of approximating the complex dynamics of two interdependent stochastic processes (diffusion process and the stochastic graph process) by an ODE (\ref{eq:ODE_active_network}) whose trajectory $x(t)$ at each time instant $t>0 $ is constrained by the algebraic condition (\ref{eq:algebraic_constraint}). Further, having an algebraic constraint restricts the number of possible sample paths of the population state vector $\{\bar{x}_n\}_{n\geq 0}$. Hence, from a statistical inference/filtering perspective, this makes estimation/prediction of the population state easier. For example, the algebraic condition can be used in Bayesian filtering algorithms (such as the one proposed in \cite{krishnamurthy2017tracking}) for the population state  to obtain more accurate results.

\section{Conclusion}
This paper explored the SIS contagion processes over social networks using a discrete-time model where, a randomly sampled node (at each time instant) faces the decision of changing its state (infected or susceptible). The mean-field dynamics was adopted to approximate the dynamics due to the exponentially large state space of the contagion process. It was shown that distributions with which the evolving node is chosen lead to different dynamics, but they induce the same critical threshold on model parameters that decides whether the contagion will spread or die out. Further, it was shown that monophilic adoption rules (taking a decision by observing friends of friends) make it easier for a contagion to spread instead dying out as a result of friendship paradox. The disassortativity of the network further amplifies the effect of the friendship paradox. Finally, the case where the underlying network is a reactive network that randomly evolves depending on the state of the contagion was studied. It was shown that the complex collective dynamics of the two (functionally) dependent stochastic processes (SIS contagion and the random graph process) can be approximated by a simple deterministic ODE whose trajectory satisfies an algebraic constraint. Our results shed light on how graph theoretic and sociological concepts such as friendship paradox and weak-ties affect dynamical processes (contagions) over social networks and, provide simple deterministic models to approximate the complex collective dynamics of contagions over stochastic graph processes.


%

%
%
%

\ifCLASSOPTIONcompsoc
%
%

\ifCLASSOPTIONcaptionsoff
  \newpage
\fi



\bibliographystyle{IEEEtran}
\bibliography{TNSE_2018_REF}
\end{document}